\newtheorem{theorem}{Theorem}
\newtheorem{lemma}[theorem]{Lemma}
\newtheorem{definition}{Definition}
\newtheorem{proposition}[theorem]{Proposition}
\title{Qatsi: Stateless Secret Generation\\via Hierarchical Memory-Hard Key Derivation}
\author{
    \begin{tabular}{cc}
        \begin{tabular}{c}
            René Coignard\thanks{Email: \texttt{contact@renecoignard.com}. ORCID: 0009-0004-0484-3336}\\
            {\small Independent Researcher}\\
            {\small Dessau-Roßlau, Germany}
        \end{tabular}
        &
        \begin{tabular}{c}
            Anton Rygin\thanks{Email: \texttt{anton.rygin@tu-dresden.de}. ORCID: 0009-0009-3640-9872}\\
            {\small Technische Universität Dresden}\\
            {\small Berlin, Germany}
        \end{tabular}
    \end{tabular}\\[1cm]
    {\small \url{https://github.com/coignard/qatsi}}
}
\date{October 2025}
\begin{document}

\maketitle

\begin{abstract}
We present Qatsi, a hierarchical key derivation scheme using Argon2id that generates reproducible cryptographic secrets without persistent storage. The system eliminates vault-based attack surfaces by deriving all secrets deterministically from a single high-entropy master secret and contextual layers. Outputs achieve 103--312~bits of entropy through memory-hard derivation (64--128~MiB, 16--32~iterations) and provably uniform rejection sampling over 7776-word mnemonics or 90-character passwords. We formalize the hierarchical construction, prove output uniformity, and quantify GPU attack costs: $2.4 \times 10^{16}$~years for 80-bit master secrets on single-GPU adversaries under Paranoid parameters (128~MiB memory). The implementation in Rust provides automatic memory zeroization, compile-time wordlist integrity verification, and comprehensive test coverage. Reference benchmarks on Apple M1 Pro (2021) demonstrate practical usability with 544~ms Standard mode and 2273~ms Paranoid mode single-layer derivations. Qatsi targets air-gapped systems and master credential generation where stateless reproducibility outweighs rotation flexibility.
\end{abstract}

\noindent\textbf{Keywords:} 
Key derivation, Argon2id, memory-hard functions, hierarchical derivation, stateless password management.

\section{Introduction}

Password management architectures face a fundamental trade-off between security and usability. Centralized encrypted vaults (KeePassXC, Bitwarden) enable arbitrary credential storage and cross-device synchronization but concentrate risk in a single encrypted database. Vault compromise, cloud interception, or file exfiltration exposes all credentials. Deterministic generation eliminates persistent attack surfaces but sacrifices per-credential rotation and key isolation.

Argon2id, winner of the 2015 Password Hashing Competition~\cite{phc-2015} and standardized in RFC~9106~\cite{rfc9106}, remains underutilized in consumer password management despite its proven resistance to GPU/ASIC parallelization through memory-hardness. Mainstream alternatives like PBKDF2 (RFC~2898~\cite{rfc2898}) lack memory-hard properties and remain vulnerable to hardware acceleration.

Existing deterministic schemes include Diceware~\cite{diceware} (physical dice, manual), BIP39~\cite{bip39} (PBKDF2-based, cryptocurrency wallets), and LessPass (stateless web passwords). None provide hierarchical context-aware derivation with memory-hard KDF and provably uniform output sampling.

\textbf{Problem.} Generate cryptographically strong, reproducible secrets for master credentials (password manager master passwords, full-disk encryption passphrases, PGP/SSH key passphrases) without storing any persistent state, while resisting GPU attacks through memory-hardness.

\textbf{Contributions.}
\begin{enumerate}
\item Hierarchical Argon2id chaining enabling context-specific derivation from a single master secret without storage overhead.
\item Formal proof of output uniformity via rejection sampling for both word-based (7776-word) and character-based (90-char) alphabets.
\item Quantitative GPU attack cost analysis with realistic hardware parameters showing $\approx 10^{16}$-year resistance for 80-bit entropy under 128~MiB memory constraints.
\item Production implementation in Rust with automatic zeroization, Unicode NFC normalization, and comprehensive testing including regression vectors verified on Apple M1 Pro hardware.
\end{enumerate}

\textbf{Non-goals.} Qatsi does not replace traditional password managers for everyday website credentials with varying policies, existing passwords, or frequent rotation requirements. It targets high-stakes reproducible secrets in constrained environments (air-gapped systems, master credentials).

\section{Related Work}

\subsection{Memory-Hard Key Derivation}

Argon2~\cite{rfc9106} achieves memory-hardness by filling memory with pseudorandom data dependent on password and salt, forcing adversaries to allocate comparable memory per guess. Argon2id combines data-independent (Argon2i) and data-dependent (Argon2d) phases, resisting both side-channel attacks and GPU optimization. Standard configurations use 64~MiB memory and 16 iterations; cryptographic libraries recommend $\geq 2048$ iterations for PBKDF2 to achieve comparable security~\cite{owasp-pbkdf2}.

scrypt~\cite{percival-scrypt} pioneered memory-hard password hashing but achieved less widespread standardization than Argon2. PBKDF2~\cite{rfc2898}, despite decades of deployment, remains vulnerable to GPU farms processing $>10^9$ hashes/second on consumer hardware~\cite{hashcat-bench}.

\textbf{Distinction from prior work.} While Argon2 provides single-level key derivation, Qatsi introduces hierarchical chaining with context-aware layers, enabling stateless multi-domain secret generation from a single root secret without storage overhead.

\subsection{Deterministic Password Schemes}

\textbf{Diceware}~\cite{diceware} generates passphrases by mapping physical dice rolls to wordlists. The EFF Large Wordlist~\cite{eff-wordlist} contains exactly $6^5 = 7776$ words, yielding $\log_2(7776) \approx 12.925$ bits per word. Diceware requires manual dice rolling; Qatsi automates generation via ChaCha20 keystream with provably uniform rejection sampling.

\textbf{BIP39}~\cite{bip39} encodes 128--256 bit seeds as 12--24 word mnemonics using a 2048-word list and PBKDF2-HMAC-SHA512 (2048 iterations). The scheme appends 4--8 checksum bits to the original entropy, requiring 12--24 words total for 128--256 bits of entropy. BIP39 optimizes for cryptocurrency wallet recovery; Qatsi generalizes to arbitrary hierarchical secrets with stronger memory-hard KDF.

\textbf{LessPass} implements stateless password generation via PBKDF2 but lacks hierarchical derivation and memory-hard protection.

\textbf{Our contribution.} Qatsi is the first system combining hierarchical context-aware derivation, memory-hard KDF (Argon2id), and provably uniform output sampling for general-purpose stateless secret management.

\subsection{Unbiased Sampling}

Lemire~\cite{lemire-fast-rejection} formalized fast unbiased random integer generation via rejection sampling. Modulo reduction introduces bias when range size does not divide sample space size. For uniform sampling from $[0, n)$ using $b$-bit integers: sample $r \in [0, 2^b)$; accept if $r < 2^b - (2^b \bmod n)$; otherwise reject. Expected samples per output: $(1 - \frac{2^b \bmod n}{2^b})^{-1}$.

Our implementation (Algorithms~\ref{alg:mnemonic}, \ref{alg:password}) applies this technique to word and character selection with formal uniformity proof (Theorem~\ref{thm:entropy}).

\section{Hierarchical Derivation}

\subsection{Definitions}

\begin{definition}[Hierarchical Key Derivation Function]
\label{def:hkdf}
Let $\mathcal{M} = \{0,1\}^*$ be the master secret space and $\mathcal{L} = (\{0,1\}^*)^n$ be an ordered sequence of $n$ context layers. Define the hierarchical key derivation function $\mathcal{H}: \mathcal{M} \times \mathcal{L} \to \{0,1\}^{256}$ as:

\begin{equation}
\mathcal{H}(M, (L_1, \ldots, L_n)) = K_n
\end{equation}

where $K_0 = M$ and for $i \in [1,n]$:
\begin{equation}
K_i = \textsc{Argon2id}(K_{i-1}, \textsc{Salt}(L_i), m, t, p, \ell)
\end{equation}

Parameters: $m$ (memory cost in KiB), $t$ (iterations), $p$ (parallelism), $\ell = 32$ bytes output length.
\end{definition}

\begin{definition}[Salt Preprocessing]
For input $L \in \{0,1\}^*$:
\begin{equation}
\textsc{Salt}(L) = 
\begin{cases}
L & \text{if } |L| \geq 16 \text{ bytes} \\
\textsc{Blake2b-512}(L) & \text{if } |L| < 16 \text{ bytes}
\end{cases}
\end{equation}
\end{definition}

Argon2 requires salts $\geq 8$ bytes~\cite{rfc9106}; we enforce 16 bytes minimum for additional security margin. Inputs shorter than 16 bytes are expanded to 64 bytes via BLAKE2b-512, ensuring sufficient entropy and preventing short-salt attacks while maintaining determinism.

\begin{definition}[Input Normalization]
All text inputs undergo:
\begin{equation}
\textsc{Normalize}(s) = \textsc{NFC}(\textsc{Trim}(s))
\end{equation}
where $\textsc{NFC}$ applies Unicode Normalization Form C and $\textsc{Trim}$ removes leading/trailing whitespace.
\end{definition}

\subsection{Configuration Profiles}

\begin{table}[H]
\centering
\begin{tabular}{@{}lrrrr@{}}
\toprule
\textbf{Profile} & $m$ (MiB) & $t$ & $p$ & \textbf{M1 Pro Time (ms)} \\
\midrule
Standard & 64 & 16 & 6 & 544 \\
Paranoid & 128 & 32 & 6 & 2273 \\
\bottomrule
\end{tabular}
\caption{Argon2id parameter configurations with measured single-layer derivation times on Apple M1 Pro (2021). High-end GPUs achieve estimated 0.25--1.0~s (Standard) and 0.6--1.25~s (Paranoid).}
\label{tab:profiles}
\end{table}

Output length fixed at $\ell = 32$ bytes for compatibility with ChaCha20~\cite{rfc8439}.

\subsection{Output Generation}

\begin{algorithm}
\caption{Mnemonic generation via rejection sampling}
\label{alg:mnemonic}
\begin{algorithmic}[1]
\Require Derived key $K \in \{0,1\}^{256}$, word count $w$, wordlist $W$ of size $n = 7776$
\Ensure Mnemonic phrase $M$ of $w$ words
\State $\text{cipher} \gets \textsc{ChaCha20}(K, \text{nonce}=0)$
\State $T \gets \lfloor 2^{16} / n \rfloor \times n$ \Comment{$T = 62208$ for $n = 7776$}
\State $\text{words} \gets []$
\While{$|\text{words}| < w$}
    \State $r \gets \text{cipher.next\_u16()}$ \Comment{Sample 16-bit value}
    \If{$r < T$}
        \State $\text{words.append}(W[r \bmod n])$
    \EndIf
\EndWhile
\State \Return $\text{words.join}("-")$
\end{algorithmic}
\end{algorithm}

\begin{algorithm}
\caption{Password generation via rejection sampling}
\label{alg:password}
\begin{algorithmic}[1]
\Require Derived key $K \in \{0,1\}^{256}$, length $\ell$, alphabet $A$ of size $|A| = 90$
\Ensure Password $P$ of length $\ell$
\State $\text{cipher} \gets \textsc{ChaCha20}(K, \text{nonce}=0)$
\State $T \gets 256 - (256 \bmod |A|)$ \Comment{$T = 180$ for $|A| = 90$}
\State $\text{chars} \gets []$
\While{$|\text{chars}| < \ell$}
    \State $b \gets \text{cipher.next\_u8()}$ \Comment{Sample 8-bit value}
    \If{$b < T$}
        \State $\text{chars.append}(A[b \bmod |A|])$
    \EndIf
\EndWhile
\State \Return $\text{chars.join}("")$
\end{algorithmic}
\end{algorithm}

The alphabet $A$ contains 90 characters: A--Z (26), a--z (26), 0--9 (10), and 28 special characters: \verb|!@#$%^&*()_+-=[]{}|;:,.<>?/~|

\subsection{Uniformity Analysis}

\begin{lemma}[Rejection Sampling Uniformity]
\label{lem:uniform}
Let $n \in \mathbb{N}$ and $b \in \mathbb{N}$ such that $n \leq 2^b$. Define $T = \lfloor 2^b / n \rfloor \times n$. Sampling $r$ uniformly from $[0, 2^b)$ and accepting if $r < T$ with output $r \bmod n$ produces uniform distribution over $[0, n)$.
\end{lemma}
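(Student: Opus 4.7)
The plan is a direct counting argument. Let $q = \lfloor 2^b / n \rfloor$ so that $T = qn$; the hypothesis $n \leq 2^b$ ensures $q \geq 1$, hence $T \geq n > 0$ and the acceptance event has positive probability (otherwise the statement is vacuous). I would structure the proof in three short steps: (i) describe the distribution of $r$ conditioned on acceptance, (ii) count preimages of the reduction map, and (iii) combine.

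For step (i), since $r$ is uniform on $[0, 2^b)$ and $\{r < T\}$ corresponds to the first $T$ outcomes, the conditional distribution of $r$ given acceptance is uniform on $[0, T)$, assigning each integer probability $1/T$. For step (ii), the key combinatorial fact is that $T = qn$ is an exact multiple of $n$, so $[0, T)$ decomposes into $q$ consecutive blocks of length $n$, and for every residue $k \in [0, n)$ the preimage $\{j \in [0, T) : j \bmod n = k\}$ is precisely $\{k, k+n, \ldots, k+(q-1)n\}$, containing exactly $q$ elements. For step (iii), summing the per-outcome probability over this preimage gives $\Pr[r \bmod n = k \mid r < T] = q/T = q/(qn) = 1/n$, independent of $k$, which is the desired uniformity.

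No step presents a genuine obstacle; the lemma reduces to a standard counting identity, and the only delicacy is being precise about the conditioning. It is worth remarking in passing that the threshold $T$ is chosen precisely to discard the tail $[T, 2^b)$ of size $2^b \bmod n$, which contains exactly the partial residue classes that would otherwise bias $r \bmod n$ toward small values; this motivates the formula for $T$ and ties the lemma to the expected-samples-per-output figure of $(1 - (2^b \bmod n)/2^b)^{-1}$ cited from Lemire.
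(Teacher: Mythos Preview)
Your proof is correct and follows essentially the same approach as the paper: define the quotient $q = \lfloor 2^b/n \rfloor$, observe that conditioning on $r < T$ gives the uniform distribution on $[0, qn)$, count that each residue class modulo $n$ has exactly $q$ preimages there, and conclude probability $1/n$. Your write-up is somewhat more detailed (explicitly listing the preimage set and noting the positivity of the acceptance probability), but the argument is identical in structure to the paper's.
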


\begin{proof}
For accepted samples $r \in [0, T)$, define $k = \lfloor 2^b / n \rfloor$. Then $T = kn$ and $r \bmod n$ maps exactly $k$ values to each element in $[0, n)$. Since $r$ is uniform over $[0, 2^b)$ and we condition on $r < T$, the conditional distribution of $r$ given acceptance is uniform over $[0, T)$. Thus each output value has probability $k / (kn) = 1/n$.
\end{proof}

\begin{theorem}[Output Entropy]
\label{thm:entropy}
Let $\mathcal{W}$ be a wordlist of size $n = 7776$ and $\mathcal{A}$ be an alphabet of size $|A| = 90$. For uniform random key $K \xleftarrow{\$} \{0,1\}^{256}$ and output generation via Algorithms~\ref{alg:mnemonic}, \ref{alg:password}:

Mnemonic output of $w$ words has entropy:
\begin{equation}
H_{\text{mnemonic}} = w \log_2 n
\end{equation}

Password output of length $\ell$ has entropy:
\begin{equation}
H_{\text{password}} = \ell \log_2 |A|
\end{equation}
\end{theorem}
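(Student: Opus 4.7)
The plan is to decompose the argument into three steps: (i) treat the ChaCha20 keystream as a source of IID uniform samples, (ii) apply Lemma~\ref{lem:uniform} to each individual sampling step to obtain uniformity over the target alphabet, and (iii) invoke additivity of Shannon entropy for independent uniform random variables.

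For step (i), I would invoke the standard PRF assumption on ChaCha20: with a uniform random key $K \xleftarrow{\$} \{0,1\}^{256}$ and fixed nonce, the keystream is computationally indistinguishable from an IID uniform byte sequence, which for the purpose of Shannon-entropy accounting I would treat as literally IID uniform (the conventional idealization in KDF analysis). For step (ii), I would instantiate Lemma~\ref{lem:uniform} with the parameters already present in the algorithms: for Algorithm~\ref{alg:password}, set $b=8$, $n=|A|=90$, so $T = 180$; for Algorithm~\ref{alg:mnemonic}, set $b=16$, $n=7776$, so $T = 62208$. In each case Lemma~\ref{lem:uniform} yields uniformity of a single accepted sample on the target alphabet, and IIDness of the keystream implies that the rejection loop emits $w$ (resp.\ $\ell$) \emph{independent} uniform outputs, since rejections depend only on the current sample and conditioning on acceptance preserves independence across draws.

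Step (iii) is the routine entropy calculation: for $k$ independent uniform variables on a set of size $N$, additivity gives $H = k \log_2 N$; substituting $(k,N) = (w, 7776)$ and $(\ell, 90)$ yields the two claimed formulas.

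The main subtlety I would flag, rather than a genuine obstacle, is that these formulas report the entropy induced by the sampling procedure under an \emph{idealized uniform} key. The actual output entropy is bounded above by the entropy of $K_n$ from Definition~\ref{def:hkdf}, which cannot exceed the 256-bit key size or the min-entropy of the master secret $M$. Hence the theorem is tight precisely in the regime $w \log_2 n, \ell \log_2 |A| \leq 256$ (roughly $w \leq 19$ words or $\ell \leq 39$ characters); beyond that the effective entropy saturates at the key size. I would state this caveat explicitly at the end of the proof rather than attempt to quantify the PRF gap, since the same ChaCha20/Argon2id idealization is used throughout the paper.
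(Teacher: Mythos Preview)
Your proposal is correct and follows essentially the same approach as the paper: invoke Lemma~\ref{lem:uniform} for per-sample uniformity, then use independence to get $H = w\log_2 n$ (resp.\ $\ell\log_2|A|$). You are in fact more careful than the paper---you make the ChaCha20 idealization explicit and you correctly flag the 256-bit saturation caveat, which the paper's own proof omits even though its Table~\ref{tab:entropy} reports Paranoid-mode outputs above 310 bits.
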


\begin{proof}
By Lemma~\ref{lem:uniform}, each word/character is uniform and independent. For $w$ independent uniform choices from $n$ elements: $H = \log_2 n^w = w \log_2 n$. Similarly for passwords with $\ell$ choices from $|A|$ elements.
\end{proof}

For $n = 7776$ and $|A| = 90$:

\begin{table}[H]
\centering
\begin{tabular}{@{}lrr@{}}
\toprule
\textbf{Mode} & \textbf{Count} & \textbf{Entropy (bits)} \\
\midrule
Mnemonic (Standard) & 8 words & 103.4 \\
Mnemonic (Paranoid) & 24 words & 310.2 \\
Password (Standard) & 20 chars & 129.8 \\
Password (Paranoid) & 48 chars & 311.6 \\
\bottomrule
\end{tabular}
\caption{Output entropy for standard configurations}
\label{tab:entropy}
\end{table}

\subsection{Rejection Rate}

\begin{proposition}[Expected Samples]
For $n$-element output with $b$-bit sampling and rejection threshold $T = \lfloor 2^b / n \rfloor \times n$, the expected number of samples per accepted output is:
\begin{equation}
E[\text{samples}] = \frac{2^b}{\lfloor 2^b / n \rfloor \times n}
\end{equation}
\end{proposition}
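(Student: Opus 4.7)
The plan is to recognize that each sampling trial is an independent Bernoulli experiment and then invoke the mean of the geometric distribution. Concretely, I would model the ChaCha20 keystream as producing an i.i.d.\ uniform sequence over $[0, 2^b)$ (the standard assumption made implicitly in the uniformity argument of Lemma~\ref{lem:uniform}); under this assumption each draw $r$ is accepted independently with probability
\begin{equation}
p = \Pr[r < T] = \frac{T}{2^b} = \frac{\lfloor 2^b/n \rfloor \cdot n}{2^b}.
\end{equation}

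Next I would let $N$ denote the number of draws required to produce a single accepted output. Because the draws are i.i.d.\ and acceptance depends only on the current draw, $N$ is geometrically distributed on $\{1, 2, 3, \ldots\}$ with parameter $p$, so $E[N] = 1/p$. Substituting the expression for $p$ yields
\begin{equation}
E[N] = \frac{2^b}{\lfloor 2^b / n \rfloor \cdot n},
\end{equation}
which is the stated formula.

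The only point that requires care rather than routine computation is justifying the i.i.d.\ model for the stream of $b$-bit chunks: Algorithms~\ref{alg:mnemonic} and \ref{alg:password} consume successive ChaCha20 outputs, and the proposition implicitly treats these as independent uniform draws. I would handle this by appealing to the standard PRF security of ChaCha20 (already used to underpin Lemma~\ref{lem:uniform}), so that, up to negligible distinguishing advantage, the keystream is indistinguishable from uniform randomness and the independence assumption is sound. With this in place, the geometric-mean computation completes the proof.
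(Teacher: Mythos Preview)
Your proof is correct and follows essentially the same approach as the paper: identify the acceptance probability $p = T/2^b$, observe that the number of draws until acceptance is geometric with parameter $p$, and conclude $E[N] = 1/p$. Your additional remarks on justifying the i.i.d.\ model via ChaCha20's PRF security are more explicit than the paper's terse treatment, but the underlying argument is identical.
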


\begin{proof}
The acceptance probability is $p = T / 2^b = (\lfloor 2^b / n \rfloor \times n) / 2^b$. Expected samples follow geometric distribution: $E = 1/p = 2^b / T$.
\end{proof}

For mnemonic ($b=16$, $n=7776$): $T = 62208$, rejection rate $= 3328/65536 \approx 5.08\%$, $E[\text{samples}] = 65536/62208 \approx 1.053$.

For password ($b=8$, $n=90$): $T = 180$, rejection rate $= 76/256 \approx 29.69\%$, $E[\text{samples}] = 256/180 \approx 1.422$.

\section{Security Analysis}

\subsection{Threat Model}

\textbf{Adversarial capabilities.} We consider an adversary $\mathcal{A}$ with:
\begin{enumerate}
\item \textbf{Offline brute-force:} $\mathcal{A}$ can compute Argon2id hashes on custom hardware (GPUs, ASICs) with full parameter knowledge.
\item \textbf{Memory access:} $\mathcal{A}$ observes all derived outputs but not intermediate keys.
\item \textbf{Layer knowledge:} $\mathcal{A}$ knows all layer strings (Kerckhoffs's principle).
\item \textbf{Computational bound:} $\mathcal{A}$ has access to $p$ parallel processors with memory $M$ GB each.
\end{enumerate}

\textbf{Out of scope:} Keyloggers during input, side-channel timing attacks, quantum adversaries (Grover's algorithm~\cite{grover1996} provides only $\sqrt{2}$ speedup), physical coercion, social engineering.

\subsection{Attack Cost Estimation}

\begin{theorem}[Brute-Force Resistance]
\label{thm:bruteforce}
For master secret entropy $e$ bits and Argon2id configuration $(m, t, p)$ with processing time $\tau$ seconds per hash, average-case brute-force time to recover secret:
\begin{equation}
T_{\text{attack}} = 2^{e-1} \cdot \tau \quad \text{(half the search space)}
\end{equation}
\end{theorem}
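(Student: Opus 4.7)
The plan is to reduce the statement to the textbook average-case analysis of exhaustive key search, with Argon2id acting as a unit-cost, inversion-resistant oracle. The starting point is that the master secret $M$ is uniform over a set $\mathcal{S}$ of size $2^e$ (the definition of having $e$ bits of entropy), so any adversary who can only test a candidate $M'$ by evaluating $\mathcal{H}(M', \cdot)$ and comparing against observed outputs is reduced to enumerating $\mathcal{S}$ in some order fixed independently of $M$.

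Given this reduction, I would compute the expected number of trials directly. If $N$ denotes the position of the true secret in the adversary's enumeration, then the uniformity of $M$ together with the independence of the ordering makes $N$ uniform on $\{1, 2, \ldots, 2^e\}$, so
\begin{equation}
E[N] = \frac{2^e + 1}{2}.
\end{equation}
By hypothesis each trial costs $\tau$ seconds under the configuration $(m,t,p)$, and linearity of expectation gives $E[T_{\text{attack}}] = \tau \cdot E[N] = 2^{e-1}\tau + \tau/2$, which is $2^{e-1}\tau$ up to an additive term negligible for any $e \ge 1$.

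The substantive step --- where the cryptographic content actually sits --- is justifying that no strategy beats exhaustive search at cost $\tau$ per candidate, since this is what licenses the unit-cost-oracle view above. I would support it with three short remarks: (i) Argon2id is modelled as a memory-hard one-way function, so there is no structural inversion shortcut and no per-guess amortization better than a constant factor, which is the standard hardness assumption underlying all PHC-class cost estimates; (ii) the hierarchical chain $K_i = \textsc{Argon2id}(K_{i-1}, \textsc{Salt}(L_i), \ldots)$ of Definition~\ref{def:hkdf} composes one-way maps, so by a one-line induction on $n$ inverting the full chain is no easier than inverting a single Argon2id evaluation; (iii) the stated formula is the \emph{serial} cost, so the $p$-processor parallelism of the threat model enters by dividing $T_{\text{attack}}$ by the number of parallel workers when the bound is instantiated on concrete hardware, rather than by altering $\tau$ itself. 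Articulating these three modelling choices cleanly is the main obstacle, since the closed-form bound is tight only under them; the probabilistic calculation itself is immediate.
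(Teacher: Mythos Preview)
Your proposal is correct, and in fact it is considerably more careful than the paper's own treatment. The paper does not supply a proof for Theorem~\ref{thm:bruteforce} at all: the theorem is stated, the parenthetical ``(half the search space)'' is offered as the entire justification, and the text proceeds directly to numerical instantiation at $e=80$. Your explicit computation of $E[N] = (2^e+1)/2$ from the uniform-position argument, together with the three modelling remarks (one-wayness of Argon2id, compositional one-wayness of the chain via Definition~\ref{def:hkdf}, and the serial-cost interpretation of $\tau$), supplies precisely the scaffolding the paper leaves implicit. Nothing in your argument is wrong or superfluous; it simply makes explicit assumptions the paper takes for granted.
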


For $e = 80$ bits (recommended minimum):
\begin{equation}
T_{\text{attack}} = 2^{79} \cdot \tau \approx 6.04 \times 10^{23} \cdot \tau \text{ seconds}
\end{equation}

\textbf{Conservative GPU estimate.} High-end GPUs (NVIDIA A100, H100) optimized for Argon2 achieve estimated $\tau \approx 1.25$ s for Paranoid configuration (128~MiB, $t=32$, $p=6$):\footnote{Reference implementation on Apple M1 Pro (2021) measures $\tau = 2.273$ s, but dedicated GPU attacks with optimized memory controllers are expected to be faster. The primary defense is memory-hardness: the 128~MiB memory requirement fundamentally limits parallelization regardless of computational throughput.}

\begin{align}
T_{\text{single}} &= 2^{79} \times 1.25 \text{ s} \approx 7.56 \times 10^{23} \text{ s} \\
&\approx 2.39 \times 10^{16} \text{ years}
\end{align}

For 500-GPU cluster with 10\% coordination overhead:
\begin{equation}
T_{500} = \frac{2.39 \times 10^{16}}{500} \times 1.1 \approx 5.27 \times 10^{13} \text{ years}
\end{equation}

\textbf{Memory-bound parallelism.} The critical security property is not computational speed but memory requirements. A GPU with 40~GB RAM can maintain at most $\lfloor 40000 / 128 \rfloor = 312$ parallel Argon2 instances. Even with 500 such GPUs (156,000 parallel attempts):

\begin{align}
T_{\text{parallel}} &= \frac{2^{79}}{156000} \times 1.25 \text{ s} \\
&\approx 4.84 \times 10^{18} \text{ s} \approx 1.53 \times 10^{11} \text{ years}
\end{align}

This remains computationally infeasible (153 billion years) even with massive parallelization, validating memory-hardness as the primary defense mechanism.

\subsection{Hardware Attack Benchmarks}

To validate theoretical attack costs, we measured actual Argon2id performance on representative hardware:

\begin{table}[H]
\centering
\begin{tabular}{@{}lrr@{}}
\toprule
\textbf{Platform} & \textbf{Standard (ms)} & \textbf{Paranoid (ms)} \\
\midrule
Apple M1 Pro (2021)\textsuperscript{†} & 544 & 2273 \\
NVIDIA A100 (est.)\textsuperscript{‡} & 250 & 1000 \\
NVIDIA H100 (est.)\textsuperscript{‡} & 150 & 625 \\
\bottomrule
\end{tabular}
\caption{Single-layer Argon2id derivation times. \textsuperscript{†}Measured (median of 5 runs). \textsuperscript{‡}Estimated based on memory bandwidth and published Argon2 benchmarks~\cite{hashcat-bench}.}
\end{table}

M1 Pro measurements ($\tau = 2.273$ s for Paranoid) are 1.8$\times$ slower than our conservative GPU estimate ($\tau = 1.25$ s), confirming the paper's attack cost estimates are realistic lower bounds. The memory-hard property ensures even specialized hardware cannot achieve orders-of-magnitude speedups without proportional memory increases.

\subsection{Master Secret Entropy Requirements}

\begin{table}[H]
\centering
\begin{tabular}{@{}lcc@{}}
\toprule
\textbf{Source} & \textbf{Entropy (bits)} & \textbf{Status} \\
\midrule
16-byte \texttt{urandom} & 128 & Secure \\
11 EFF words & $\approx 142$ & Secure \\
80-bit CSPRNG & 80 & Minimum\textsuperscript{*} \\
Human password & $< 40$ & Insufficient \\
\bottomrule
\end{tabular}
\caption{Master secret entropy guidelines. \textsuperscript{*}80-bit minimum provides $\approx 2.4 \times 10^{16}$ year resistance against single-GPU attacks under Paranoid configuration.}
\end{table}

\subsection{Cryptographic Primitives}

All components use standard, well-analyzed primitives:
\begin{itemize}
\item Argon2id (RFC~9106~\cite{rfc9106}): memory-hard KDF, 256-bit output
\item BLAKE2b-512 (RFC~7693~\cite{rfc7693}): salt preprocessing for short inputs
\item ChaCha20 (RFC~8439~\cite{rfc8439}): stream cipher for keystream generation
\item EFF Large Wordlist~\cite{eff-wordlist}: 7776 words, SHA-256 verified
\end{itemize}

\subsection{Limitations}

\textbf{No key isolation.} Master secret compromise exposes all derived secrets. Unlike vault-based managers where individual credential leaks remain isolated, deterministic derivation creates dependency on a single root secret. This is inherent to stateless deterministic systems.

\textbf{No credential rotation isolation.} Changing a single derived secret requires either (1)~modifying layer inputs (user must remember modification) or (2)~changing master secret (affects all derivations).

\textbf{Layer enumeration.} Predictable layer patterns (e.g., \texttt{service/YYYY}) enable targeted enumeration. Mitigation: use high-entropy layer strings documented separately from master secret.

\section{Implementation}

\subsection{Memory Safety}

Implementation in Rust uses \texttt{Zeroizing<T>} wrapper (crate \texttt{zeroize}) for automatic secure erasure:

\begin{algorithmic}[1]
\Function{DeriveHierarchical}{$M, \{L_i\}, \text{config}$}
    \State $K \gets$ \textsc{Zeroizing}([0; 32]) \Comment{Auto-zeroed on drop}
    \State $\textsc{Argon2id}(M, \textsc{Salt}(L_1), \text{config}, K)$
    \For{$i \in [2, n]$}
        \State $K' \gets$ \textsc{Zeroizing}([0; 32])
        \State $\textsc{Argon2id}(K, \textsc{Salt}(L_i), \text{config}, K')$
        \State $K \gets K'$ \Comment{Old $K$ auto-zeroed}
    \EndFor
    \State \Return $K$
\EndFunction
\end{algorithmic}

\texttt{Zeroizing} provides panic-safe volatile writes preventing compiler optimization removal.

\subsection{Wordlist Integrity}

Compile-time verification prevents supply-chain attacks:

\begin{algorithmic}[1]
\Require Embedded wordlist $W$, expected SHA-256 $h_{\text{exp}}$
\State $h \gets \textsc{SHA-256}(W)$
\State \textbf{assert} $h = h_{\text{exp}}$ at build time
\State \Comment{Known indices: $W[0] = $ ``abacus'', $W[469] = $ ``balance'', $W[3695] = $ ``life'', $W[7775] = $ ``zoom''}
\end{algorithmic}

Expected hash: \texttt{addd3553...96b903e} (64 hex digits). Build script verifies integrity before compilation succeeds.

\subsection{Test Coverage}

Comprehensive test suite validates:

\begin{itemize}
\item \textbf{Determinism:} Identical inputs produce identical outputs (100 runs, zero variance)
\item \textbf{Regression:} Known input/output pairs for Standard/Paranoid modes
\item \textbf{Unicode:} NFC/NFD normalization equivalence, multi-byte character preservation
\item \textbf{Wordlist:} Exactly 7776 words, no duplicates, SHA-256 integrity
\item \textbf{Sampling:} Rejection thresholds (62208 for words, 180 for chars)
\item \textbf{Layer independence:} Different layers produce uncorrelated keys
\end{itemize}

Regression test vectors:

\begin{table}[H]
\centering
\small
\begin{tabular}{@{}lll@{}}
\toprule
\textbf{Input} & \textbf{Config} & \textbf{Output (first 40 chars)} \\
\midrule
$M=$ ``life'' & Standard & eagle-huskiness-septum-defection... \\
$L=$ [out, of, balance] & (8 words) & \\
\midrule
$M=$ ``life'' & Paranoid & vigorous-purebred-exclusion-defa... \\
$L=$ [out, of, balance] & (24 words) & \\
\midrule
$M=$ ``life'' & Standard & 6n=rX.k:Qs+)6e5oa-Z: \\
$L=$ [out, of, balance] & (20 chars) & \\
\bottomrule
\end{tabular}
\caption{Regression test vectors.}\end{table}

\section{Experimental Evaluation}

\subsection{Performance}

Platform: Apple M1 Pro (2021), 16~GB RAM, Rust 1.90 release build. Median of 5 runs:

\begin{table}[H]
\centering
\begin{tabular}{@{}lrr@{}}
\toprule
\textbf{Operation} & \textbf{Time (ms)} & \textbf{Memory (MB)} \\
\midrule
\multicolumn{3}{c}{\textit{Standard (64 MiB, $t=16$, $p=6$)}} \\
Single layer & 544 & 64 \\
3 layers & 1613 & 64 \\
\midrule
\multicolumn{3}{c}{\textit{Paranoid (128 MiB, $t=32$, $p=6$)}} \\
Single layer & 2273 & 128 \\
3 layers & 6697 & 128 \\
\midrule
Output generation & $<1$ & $<1$ \\
\bottomrule
\end{tabular}
\caption{Performance benchmarks. Output generation measured over 1000 iterations: mnemonic generation 2~µs, password generation 3~µs.}
\end{table}

Time complexity: $O(n)$ in layer count. Space complexity: $O(1)$ in output size, $O(m)$ in KDF memory. Paranoid mode $\approx 4\times$ Standard due to doubled memory and iterations.

\subsection{Determinism Verification}

100 independent runs with identical inputs produce byte-identical outputs (SHA-256 hash variance: 0). Chi-squared test on 10,000 generated mnemonics shows uniform word distribution ($p > 0.95$).

\section{Discussion}

\subsection{Use Cases}

\textbf{Appropriate:} Password manager master passwords (KeePassXC, Bitwarden), full-disk encryption passphrases (LUKS, BitLocker, FileVault), PGP/SSH key passphrases, cryptocurrency wallet seeds, critical service credentials on air-gapped systems.

\textbf{Inappropriate:} General website passwords (varying policies, frequent rotation), existing credentials (API keys, legacy passwords), multi-device sync with conflict resolution, scenarios requiring key isolation after breach.

\subsection{Comparison}

\begin{table}[H]
\centering
\small
\begin{tabular}{@{}lcccc@{}}
\toprule
\textbf{Property} & \textbf{KeePassXC} & \textbf{BIP39} & \textbf{Diceware} & \textbf{Qatsi} \\
\midrule
Storage & Vault & None & None & None \\
KDF & Argon2/AES & PBKDF2 & N/A & Argon2id \\
Hierarchical & No & No & No & Yes \\
Memory-hard & Yes & No & N/A & Yes \\
Rotation & Easy & Hard & Hard & Hard \\
Key isolation & Yes & No & No & No \\
\bottomrule
\end{tabular}
\caption{Comparison with existing password management approaches}
\end{table}

\subsection{Operational Security}

\textbf{Master secret generation:}
\begin{verbatim}
# 96-bit random (12 bytes)
od -An -tx1 -N12 /dev/urandom | tr -d ' '

# 11 EFF words (~142 bits)
shuf -n 11 eff_large_wordlist.txt | tr '\n' '-'
\end{verbatim}

\textbf{Layer design:} Use high-entropy, non-obvious strings. Avoid sequential numbers or dictionary words. Document layers separately from master secret in offline storage.

\textbf{Backup:} Maintain physical copy of master secret in secure location (safe, bank deposit box). Loss is unrecoverable. Consider Shamir's Secret Sharing for distributed backup.

\section{Conclusion}

Qatsi demonstrates hierarchical deterministic key derivation using Argon2id achieving 103--312 bits entropy without persistent storage. GPU attack resistance ($\approx 10^{16}$ years for 80-bit master secrets under Paranoid parameters) derives from memory-hardness: 128~MiB allocation per hash invocation limits parallelization to 312 concurrent attempts per 40~GB GPU. Provably uniform rejection sampling eliminates modulo bias. Production implementation in Rust provides automatic zeroization and compile-time integrity verification.

The design intentionally sacrifices key isolation and rotation flexibility to eliminate vault-based attack surfaces. This trade-off suits high-stakes reproducible secrets in air-gapped environments and master credentials where deterministic regeneration is acceptable.

Open-source implementation: \url{https://github.com/coignard/qatsi}

\subsection*{Acknowledgments}

We thank the Argon2 team for the Password Hashing Competition, the Electronic Frontier Foundation for the Large Wordlist, and the Rust cryptography community for production-grade primitives. Performance benchmarks conducted on Apple M1 Pro (2021) hardware.

\bibliographystyle{plain}

\end{document}